\documentclass[reprint, floatfix, superscriptaddress, amsmath, amssymb, longbibliography]{revtex4-2}

\pdfoutput=1
\usepackage[utf8]{inputenc}
\usepackage[T1]{fontenc}

\usepackage{graphicx}
\usepackage{tikz}
\usetikzlibrary{patterns,calc,arrows.meta}

\usepackage{mathtools}
\usepackage{physics}
\usepackage{braket}
\usepackage{amsthm}
\usepackage{array}
\usepackage{booktabs}
\usetikzlibrary{calc,patterns,arrows.meta,positioning,decorations.pathmorphing,decorations.markings,shapes.geometric,fit,backgrounds}

\usepackage{enumitem}

\usepackage{url}
\usepackage{hyperref}
\usepackage{cleveref}

\newtheorem{theorem}{Theorem}[section]

\newtheorem{proposition}[theorem]{Proposition}


\newcommand{\domain}{\mathcal{D}}
\newcommand{\Lz}{L_z}

\newcommand{\proj}[1]{|#1\rangle\langle#1|}
\newcommand{\Hint}{H_{\text{int}}}

\begin{document}
\raggedbottom

\title{\textbf{A Boundary Condition Perspective on Circuit QED Dispersive Readout}}

\author{Mustafa Bakr}
\email{mustafa.bakr@physics.ox.ac.uk}
\affiliation{Clarendon Laboratory, Department of Physics, University of Oxford}

\begin{abstract}
Boundary conditions in confined geometries and measurement interactions in quantum mechanics share a common structural role: both select a preferred basis by determining which states are compatible with the imposed constraint. This paper develops this perspective for circuit QED dispersive readout through a first-principles derivation starting from the circuit Lagrangian. The transmon qubit terminating a transmission line resonator provides a frequency-dependent boundary condition whose pole structure encodes the qubit's transition frequencies; different qubit states yield different resonator frequencies. Two approximations, linear response and a pole-dominated expansion valid near resonance, reduce the boundary function to a rational form in the Sturm-Liouville eigenparameter. The extended Hilbert space of the Fulton-Walter spectral theory then provides a framework for the dressed-mode eigenvalue problem conditional on the qubit state. The dispersive shift and vacuum Rabi splitting emerge from the transcendental eigenvalue equation, with the residues determined by matching to the splitting: $\delta_{ge} = 2Lg^2\omega_q^2/v^4$, where $g$ is the vacuum Rabi coupling. A level repulsion theorem guarantees that no dressed mode frequency coincides with a transmon transition. For two qubits with matched dispersive shifts, odd-parity states become frequency-degenerate; true parity-only measurement requires engineered suppression of linear dispersive terms.
\end{abstract}

\maketitle

\section{Introduction}

The quantum measurement problem admits several distinct formulations~\cite{Schlosshauer2007}. One can ask what determines which observable is being measured, why measurement outcomes correspond to eigenvalues, what accounts for the Born rule probabilities, and why observers report single definite outcomes. This paper addresses primarily the first question by developing a geometric perspective that draws on the physics of boundary conditions in confined quantum systems.

Consider a particle constrained to move on a spherical wedge with perfectly reflecting walls at $\phi = 0$ and $\phi = \Phi$~\cite{BakrAmari2025SphericalWedge}. The Dirichlet boundary conditions $\psi(0) = \psi(\Phi) = 0$ determine the allowed wavefunctions: they must be of the form $\sin(n\pi\phi/\Phi)$ for positive integers $n$. These boundary conditions also have consequences for the observable structure: the angular momentum operator $\Lz$, which is self-adjoint on the full sphere, fails to preserve the constrained domain and becomes merely symmetric on the wedge~\cite{ReedSimon1975}. We argue that the measurement Hamiltonian plays an analogous structural role in determining which observable is measured: the interaction Hamiltonian determines which observable's eigenstates are operationally stable, just as Dirichlet conditions determine which modes are compatible with the wedge geometry. We emphasize that this analogy operates at the level of functional role, not procedural identity. The wedge imposes a kinematic constraint that restricts the Hilbert space from the outset, while the measurement interaction imposes a dynamical constraint whose consequences emerge through environmental coupling. The mathematical formulations differ accordingly, i.e., domain restriction in one case, spectral extension in the other, but both encode the same structural principle: a constraint determines which states are compatible with the imposed boundary, and operators that mix compatible with incompatible states fail to be well-defined observables on the constrained system.

This paper develops three claims. The first claim is that the dispersive readout problem emerges naturally from a first-principles analysis of a transmon-terminated transmission line resonator. Starting from the circuit Lagrangian for the coupled system, we derive the wave equation in the bulk and the nonlinear boundary condition at the transmon from the variational principle. Linearization yields a frequency-dependent boundary condition; quantization of the transmon via the Kubo formula for linear response produces a state-dependent boundary function with poles at the qubit transition frequencies. In the dispersive regime where frequency shifts are small compared to the resonator frequency, this boundary function becomes approximately rational in the eigenparameter $\lambda = \omega^2/v^2$, placing the problem within the Fulton-Walter framework for Sturm-Liouville problems with eigenparameter-dependent boundary conditions~\cite{Fulton1977, Walter1973, Binding1994}. The extended Hilbert space $\mathcal{H} = L^2 \oplus \mathbb{C}^M$ of the spectral theory provides a mathematical structure for the resonator-qubit decomposition. The dispersive shift $\chi = g^2\alpha/[\Delta(\Delta+\alpha)]$ emerges from this framework, with the residue formula $\delta = 2\omega_r|\omega_q||Q_{ge}|^2/(\hbar v Z_0)$ derived explicitly in terms of circuit parameters. The interaction Hamiltonian $\Hint = \hbar\chi\sigma_z\hat{n}$ satisfies $[\Hint, \sigma_z] = 0$, which is why dispersive readout measures $\sigma_z$ and not $\sigma_x$~\cite{Blais2004, Wallraff2004}. The $\sigma_z$ eigenstates are the operationally stable states under dispersive measurement, just as Dirichlet boundary conditions exclude cosine modes from the physical Hilbert space of a confined geometry. This structural principle, that observables commuting with a dominant Hamiltonian become operationally stable, connects to foundational work on protective measurement~\cite{AharonovAnandanVaidman1993} and recent developments in Hamiltonian stabilization for quantum error correction~\cite{Cortinas2025}.

The second claim concerns the spectral structure and multimode physics. An interlacing theorem, proved using the monotonicity properties of the resonator function $G(\lambda) = \sqrt{\lambda}\cot(\sqrt{\lambda}L)$ and the boundary function $F(\lambda)$, guarantees exactly one dressed eigenvalue between each consecutive pair of poles. This provides a foundation for level repulsion and yields the vacuum Rabi splitting $2g$ at resonance through explicit calculation of $G'(\lambda)$ at the crossing point. The multimode extension reveals that the mode-dependent coupling scales as $g_n \propto \sqrt{\omega_n}$, leading to an ultraviolet divergence in the Lamb shift that requires renormalization: the ``bare'' qubit frequency in the Hamiltonian differs from the physical (measured) frequency by a formally infinite correction. The dispersive shift, by contrast, remains finite because its sum over modes converges.

The third claim extends this perspective to multiple qubits and error correction. For two qubits coupled to a shared resonator, the dispersive Hamiltonian commutes with parity $P = \sigma_1^z\sigma_2^z$, making parity a quantum non-demolition (QND) observable within the dispersive approximation. When dispersive shifts are matched ($\chi_1 = \chi_2$), the resonator becomes insensitive to which odd-parity state is occupied ($|ge\rangle$ versus $|eg\rangle$), but $|gg\rangle$ and $|ee\rangle$ remain distinguishable. Achieving a true parity-only response, where all states within each parity sector produce identical resonator frequencies, requires engineered cancellation of the linear dispersive terms~\cite{Royer2018, DiVincenzo2013}. The boundary condition framework provides conceptual insight into these schemes: the apparatus is designed so that parity eigenstates become the operationally stable configurations. The stabilizer conditions $S_i\ket{\psi} = +\ket{\psi}$ defining a quantum error-correcting code space~\cite{Gottesman1997} have the same mathematical structure as boundary conditions defining allowed modes: both are constraints that select a subspace of compatible states. We demonstrate this correspondence explicitly for parity stabilizers, where the physical implementation via dispersive coupling provides a concrete realization. However, $X$-type stabilizers require different physical implementations, typically ancilla qubits and sequences of two-qubit gates, and the correspondence for general stabilizer codes remains structural rather than derived from circuit physics.

This framework addresses how the measurement Hamiltonian selects the eigenbasis, provides explicit first-principles analysis for single-qubit dispersive readout and two-qubit parity measurement, and connects these to the mathematical structure of stabilizer codes. The framework does not address Bell nonlocality and EPR correlations~\cite{Bell1964}, why observers experience definite outcomes~\cite{Wallace2012}, the ontological status of branches in the post-measurement state~\cite{Everett1957}, or the origin of Born rule probabilities~\cite{Gleason1957}. These questions, while important, lie outside the scope of the present analysis. We emphasize from the outset that ``basis selection'' and ``outcome selection'' are distinct problems. Basis selection asks: given a measurement apparatus, which observable does it measure? Outcome selection asks: given that observable, why does a particular eigenvalue occur? This paper addresses the first question through physical analysis derived from circuit physics. The second question remains contested and lies outside our scope. The framework takes the classicality of the measurement apparatus as a prerequisite---established through environmental decoherence, as analyzed in Section~\ref{sec:classical_bc}---and investigates the structural consequences for basis selection. Paz and Zurek have shown that in the weak-dissipation regime, the system's self-Hamiltonian determines the einselected basis independently of environmental details~\cite{PazZurek1999}. The boundary condition perspective makes this concrete: the structure of the interaction Hamiltonian, not the specifics of decoherence, determines which observable is measured.

Throughout this paper, we use the term ``branch'' in a precise mathematical sense: a branch is a term in the decoherence-diagonal density matrix~\cite{Zurek2003, Joos1985}. If the total state after decoherence takes the form $\rho \approx \sum_i p_i \rho_i^{(S)} \otimes \proj{\phi_i^{(E)}}$, where the environment states $\ket{\phi_i^{(E)}}$ are approximately orthogonal, then each term constitutes a branch. This definition is purely mathematical and carries no ontological commitment. It is compatible with the Many-Worlds interpretation~\cite{Everett1957}, Copenhagen-like interpretations, relational quantum mechanics~\cite{Rovelli1996}, and QBism~\cite{Fuchs2014}. Figure~\ref{fig:mode_selection} illustrates the structural parallel across these three settings: Dirichlet boundary conditions selecting angular modes in a wedge, dispersive coupling selecting $\sigma_z$ eigenstates in single-qubit readout, and matched dispersive shifts selecting parity eigenstates in two-qubit measurement.

\begin{figure*}[t]
\centering
\begin{tikzpicture}[scale=0.9]

\begin{scope}[shift={(0,0)}]
    \node[font=\bfseries] at (1.5,3.5) {(A)};
    
    \fill[blue!10] (0,0) -- (3.2,0) arc (0:55:3.2) -- cycle;
    
    \draw[very thick] (0,0) -- (3.4,0);
    \draw[very thick] (0,0) -- (1.95,2.8);
    
    \node[below, font=\small] at (1.7,-0.2) {$\phi=0$};
    \node[above left, font=\small] at (1.7,2.7) {$\phi=\Phi$};
    
    \draw[->, thick] (1.0,0) arc (0:55:1.0);
    \node[font=\small] at (1.25,0.65) {$\Phi$};
    
    \fill (0,0) circle (1.5pt);
\end{scope}

\begin{scope}[shift={(6,0)}]
    \node[font=\bfseries] at (1.8,3.5) {(B)};
    
    \node[font=\small] at (0,3.0) {R};
    \draw[thick, gray] (-0.5,0.6) -- (0.5,0.6);
    \draw[thick, gray] (-0.5,1.2) -- (0.5,1.2);
    \draw[thick, gray] (-0.5,1.8) -- (0.5,1.8);
    \draw[thick, gray] (-0.5,2.4) -- (0.5,2.4);
    \node[left, font=\tiny, gray] at (-0.55,0.6) {$|0\rangle$};
    \node[left, font=\tiny, gray] at (-0.55,1.2) {$|1\rangle$};
    \node[left, font=\tiny, gray] at (-0.55,1.8) {$|2\rangle$};
    
    \node[font=\small] at (3.5,3.0) {Qubit};
    
    \draw[very thick, blue] (2.8,0.8) -- (4.2,0.8);
    \node[right, font=\small, blue] at (4.25,0.8) {$|g\rangle$};
    
    \draw[very thick, orange] (2.8,2.2) -- (4.2,2.2);
    \node[right, font=\small, orange] at (4.25,2.2) {$|e\rangle$};
    
    \draw[thick, <->, decorate, decoration={snake, amplitude=1.5pt, segment length=6pt}] 
        (0.6,1.5) -- (2.6,1.5);
    
    \node[font=\small] at (1.6,1.0) {$\hbar\chi\sigma_z\hat{n}$};
\end{scope}

\begin{scope}[shift={(12,0)}]
    \node[font=\bfseries] at (1.3,3.5) {(C)};
    
    \node[font=\small] at (1.3,3.0) {R};
    \draw[thick, gray] (0.8,2.2) -- (1.8,2.2);
    \draw[thick, gray] (0.8,2.6) -- (1.8,2.6);
    
    \node[font=\small] at (-0.2,1.5) {$Q_1$};
    \draw[thick, blue] (-0.7,0.7) -- (0.3,0.7);
    \draw[thick, orange] (-0.7,1.2) -- (0.3,1.2);
    
    \node[font=\small] at (2.8,1.5) {$Q_2$};
    \draw[thick, blue] (2.3,0.7) -- (3.3,0.7);
    \draw[thick, orange] (2.3,1.2) -- (3.3,1.2);
    
    \draw[thick, <->, decorate, decoration={snake, amplitude=1.2pt, segment length=5pt}] 
        (0.35,1.15) -- (0.85,2.1);
    \draw[thick, <->, decorate, decoration={snake, amplitude=1.2pt, segment length=5pt}] 
        (1.75,2.1) -- (2.25,1.15);
    
    \node[font=\tiny] at (0.35,1.75) {$\chi_1$};
    \node[font=\tiny] at (2.25,1.75) {$\chi_2$};
    
    \node[font=\small] at (1.3,0.3) {$\chi_1 = \chi_2$};
\end{scope}

\end{tikzpicture}

\caption{Mode selection by constraints. (A)~Spherical wedge with Dirichlet boundary conditions $\psi(0)=\psi(\Phi)=0$. 
Allowed modes are $\sin(n\pi\phi/\Phi)$, which vanish at both walls; 
cosine modes violate the boundary at $\phi=0$ and are excluded. (B)~Single-qubit dispersive readout. A resonator R couples to a qubit 
via $\Hint = \hbar\chi\sigma_z\hat{n}$; the qubit state shifts the 
resonator frequency, making $|g\rangle$ and $|e\rangle$ distinguishable. (C)~Two-qubit parity measurement. When $\chi_1 = \chi_2$, the odd-parity states $\lvert ge \rangle$ and $\lvert eg \rangle$ produce identical resonator frequencies, while the even-parity states $\lvert gg \rangle$ and $\lvert ee \rangle$ remain distinguishable. This degeneracy alone does not achieve true parity-only measurement, which requires additional engineering to cancel the linear dispersive terms (see Section~\ref{sec:parity}).}
\label{fig:mode_selection}
\end{figure*}
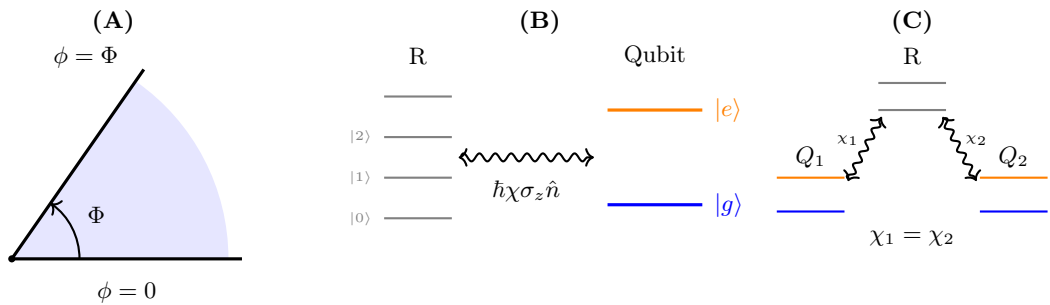

The paper proceeds as follows. Section~\ref{sec:wedge} develops the physics of the spherical wedge and shows how boundary conditions constrain admissible observables. Section~\ref{sec:classical_bc} examines how decoherence establishes the classicality of boundaries. Section~\ref{sec:measurement} applies this perspective to quantum measurement. Section~\ref{sec:dynamic_bc} develops the dispersive readout problem from first principles, starting from the circuit Lagrangian, deriving the boundary condition at the transmon, and obtaining the state-dependent boundary function via quantum linear response. Section~\ref{sec:spectral} analyzes the spectral structure through Sturm-Liouville theory, proves the interlacing theorem, derives the vacuum Rabi splitting, and addresses multimode physics including ultraviolet divergences and renormalization. Section~\ref{sec:parity} extends the framework to multiple qubits and analyzes parity measurement. Section~\ref{sec:qec} discusses the structural correspondence to quantum error correction. Section~\ref{sec:born} addresses the Born rule within standard assumptions. Section~\ref{sec:open} is explicit about what remains unexplained.

\section{The Spherical Wedge and Observable Structure}
\label{sec:wedge}

Consider a particle constrained to move on a portion of a sphere, specifically a wedge defined by $\phi \in [0, \Phi]$ with the polar angle $\theta$ unrestricted. We impose Dirichlet boundary conditions: the wavefunction must vanish at both walls,
\begin{equation}
\psi(\theta, 0) = \psi(\theta, \Phi) = 0 \quad \text{for all } \theta.
\label{eq:dirichlet}
\end{equation}
The azimuthal part of the wavefunction must satisfy these conditions. The solutions are $\sin(\mu\phi)$ where $\mu\Phi = n\pi$ for positive integers $n$, giving
\begin{equation}
\mu_n = \frac{n\pi}{\Phi}, \quad n = 1, 2, 3, \ldots
\label{eq:mu_quantization}
\end{equation}
On the full sphere with periodic boundary conditions, the angular momentum quantum numbers are integers $m \in \mathbb{Z}$. On the wedge, the effective quantum numbers $\mu_n$ are generically non-integer.

A general state in the wedge domain is a superposition
\begin{equation}
\psi(\phi) = \sum_{n=1}^{\infty} c_n \sin\left(\frac{n\pi\phi}{\Phi}\right),
\label{eq:general_wedge_state}
\end{equation}
with arbitrary coefficients $c_n$ subject to normalization. The boundary conditions do not select a single mode; they select the \textit{basis} of allowed modes.

\subsection{The Fate of $\Lz$}

On the full sphere, the angular momentum operator $\Lz = -i\hbar\, \partial/\partial\phi$ is self-adjoint with eigenfunctions $e^{im\phi}$ and integer eigenvalues $m\hbar$. On the wedge with Dirichlet conditions, the situation changes. Consider the action of $\Lz$ on a basis function:
\begin{equation}
\Lz \sin\left(\frac{n\pi\phi}{\Phi}\right) = -i\hbar \frac{n\pi}{\Phi} \cos\left(\frac{n\pi\phi}{\Phi}\right).
\label{eq:Lz_on_sin}
\end{equation}
The cosine function does not vanish at $\phi = 0$: $\cos(0) = 1 \neq 0$. Therefore $\Lz$ maps functions satisfying the Dirichlet condition to functions that violate it.

\begin{proposition}
\label{prop:Lz_symmetric}
The operator $\Lz$ is symmetric but not self-adjoint on the Dirichlet domain $\domain = \{\psi \in H^1([0,\Phi]) : \psi(0) = \psi(\Phi) = 0\}$.
\end{proposition}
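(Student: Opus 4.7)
The plan is to prove symmetry by integration by parts, and then to establish non-self-adjointness by exhibiting a strict inclusion $\domain \subsetneq \domain(\Lz^{\ast})$, either directly from the adjoint's characterization or via nontrivial deficiency subspaces. Both steps are standard for a first-order differential operator on an interval; the mild care needed is in bookkeeping the boundary terms and then identifying the correct adjoint domain.

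First I would establish symmetry. For $\phi, \psi \in \domain$, I would write
\begin{equation*}
\langle \phi, \Lz \psi \rangle = \int_0^{\Phi} \overline{\phi(\varphi)}\,(-i\hbar\,\partial_\varphi \psi(\varphi))\,d\varphi,
\end{equation*}
integrate by parts, and note that the boundary term $-i\hbar\,[\overline{\phi}\,\psi]_0^{\Phi}$ vanishes identically because both $\phi$ and $\psi$ satisfy the Dirichlet conditions $\phi(0)=\phi(\Phi)=\psi(0)=\psi(\Phi)=0$. What remains equals $\langle \Lz \phi, \psi \rangle$, establishing symmetry on $\domain$.

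For non-self-adjointness, the cleanest route is via the adjoint domain. I would show that $\domain(\Lz^{\ast}) = H^1([0,\Phi])$, with no boundary conditions imposed. The argument is: $\chi \in \domain(\Lz^{\ast})$ iff $\psi \mapsto \langle \chi, \Lz \psi \rangle$ extends continuously to $L^2$, which by the standard characterization of weak derivatives forces $\chi \in H^1$; conversely, any $\chi \in H^1$ works because for $\psi \in \domain$ the boundary term in the integration by parts already vanishes from $\psi$ alone, so no condition on $\chi(0)$ or $\chi(\Phi)$ is needed. Hence $\domain(\Lz^{\ast}) \supsetneq \domain$ strictly (e.g., the constant function $\chi \equiv 1$ lies in $\domain(\Lz^{\ast})$ but not in $\domain$), and $\Lz$ is not self-adjoint.

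As an alternative I would offer the deficiency-index computation, which also illustrates the space of possible self-adjoint extensions. Solving $\Lz^{\ast} \chi_\pm = \pm i \chi_\pm$ gives $\chi_\pm(\varphi) = c\,e^{\mp \varphi/\hbar}$, both nonzero elements of $L^2([0,\Phi])$, so the deficiency indices are $(n_+, n_-) = (1,1)$. By von Neumann's theorem this confirms non-self-adjointness while producing a $U(1)$ family of self-adjoint extensions parametrized by boundary phases $\psi(\Phi) = e^{i\theta}\psi(0)$; none of these lies in $\domain$, which reinforces the conceptual point that Dirichlet conditions are incompatible with making $\Lz$ observable. The only real obstacle is the characterization of $\domain(\Lz^{\ast})$, which is routine once phrased as a weak-derivative argument; everything else is bookkeeping.
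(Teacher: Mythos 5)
Your proof is correct and follows the same basic route as the paper: symmetry via integration by parts with the boundary term killed by the Dirichlet conditions, and non-self-adjointness via the strict inclusion $\domain \subsetneq \domain(\Lz^{\ast})$. The difference is one of completeness rather than strategy. The paper simply \emph{asserts} that ``the adjoint is defined on a larger domain that includes functions not satisfying the boundary conditions,'' whereas you actually establish this: you characterize $\domain(\Lz^{\ast}) = H^1([0,\Phi])$ via the weak-derivative argument (the key observation being that the boundary term already vanishes from $\psi$ alone, so the adjoint imposes no condition at the endpoints) and you exhibit the explicit witness $\chi \equiv 1$. Your deficiency-index computation, $\chi_{\pm} = c\,e^{\mp\varphi/\hbar} \in L^2$ giving indices $(1,1)$, is genuinely additional content not present in the paper; it buys you the classification of self-adjoint extensions as the $U(1)$ family $\psi(\Phi) = e^{i\theta}\psi(0)$, which sharpens the paper's closing remark that restoring self-adjointness ``corresponds physically to imposing different boundary conditions.'' One cosmetic point: the paper first reduces the full spherical-wedge inner product (with the $\sin\theta$ weight) to the azimuthal inner product on $[0,\Phi]$ before integrating by parts; you work directly on the interval, which is consistent with the domain as stated in the proposition. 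Also note that each self-adjoint extension's domain strictly \emph{contains} $\domain$ (Dirichlet functions satisfy $\psi(\Phi)=e^{i\theta}\psi(0)$ trivially), so the precise statement is that no extension's domain equals $\domain$, not that the extensions are disjoint from it.
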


\begin{proof}
The full inner product on the spherical wedge is 
$\langle \psi | \chi \rangle = \int_0^\pi \sin\theta\, d\theta \int_0^\Phi d\phi\, 
\psi^*(\theta,\phi)\chi(\theta,\phi)$. For states of the form 
$\psi(\theta,\phi) = \Theta(\theta)\varphi(\phi)$, the $\theta$-integration 
factors out, and the relevant inner product for the azimuthal part is 
$\langle \varphi_1 | \varphi_2 \rangle_\phi = \int_0^\Phi \varphi_1^*(\phi) 
\varphi_2(\phi)\, d\phi$. We work with this reduced inner product in what follows.

For $\psi, \chi \in \domain$:
\begin{equation}
\langle \psi | \Lz \chi \rangle_\phi = -i\hbar \int_0^\Phi \psi^*(\phi) 
\frac{d\chi}{d\phi}\, d\phi = \langle \Lz \psi | \chi \rangle_\phi + 
i\hbar[\psi^* \chi]_0^\Phi.
\end{equation}
The boundary term vanishes since both $\psi$ and $\chi$ satisfy Dirichlet 
conditions, so $\Lz$ is symmetric. However, $\Lz$ is not self-adjoint because $\domain(\Lz) \subsetneq \domain(\Lz^*)$. 
The adjoint is defined on a larger domain that includes functions not satisfying 
the boundary conditions.
\end{proof}
Equivalently, the Dirichlet domain is not invariant under $\Lz$: the operator maps functions satisfying the boundary conditions to functions that violate them. On this domain, $\Lz$ is not a self-adjoint observable; making $\Lz$ self-adjoint requires changing the domain, which corresponds physically to imposing different boundary conditions.

\subsection{Operators That Preserve the Domain}
In contrast to $\Lz$, the second-order operator $\partial^2/\partial\phi^2$ preserves the Dirichlet domain. Being second-order in $\phi$, it maps sine functions to sine functions:
\begin{equation}
\frac{\partial^2}{\partial\phi^2} \sin\left(\frac{n\pi\phi}{\Phi}\right) = -\left(\frac{n\pi}{\Phi}\right)^2 \sin\left(\frac{n\pi\phi}{\Phi}\right).
\label{eq:L2_on_sin}
\end{equation}
This operator preserves the boundary conditions and hence preserves the domain, allowing us to define a self-adjoint angular operator on the wedge. The eigenvalues are $-\mu_n^2 = -(n\pi/\Phi)^2$, as shown in Eq.~\eqref{eq:L2_on_sin}. On the full sphere with periodic boundary conditions, the analogous operator $L_z^2$ has eigenvalues $m^2\hbar^2$ for integer $m$; on the wedge, the effective quantum numbers $\mu_n = n\pi/\Phi$ are generically non-integer. From a symmetry perspective, the wedge walls break the SO(2) rotational symmetry. When this symmetry is broken, the generator $\Lz$ no longer preserves the physical domain, while higher-order operators that do not generate the broken symmetry can still be defined.

This distinction between differential operators and physically admissible observables is not specific to quantum measurement. An explicit and fully classical realization occurs in wave systems with restricted domains, where symmetry generators cease to be self-adjoint while higher-order operators remain well-defined. Observable selection is enforced by self-adjointness in the restricted domain rather than by algebraic considerations alone~\cite{BakrAmari2025Trichotomy, BakrZhangAmari2025SphericalCavities}.

\section{Classical Boundaries and Decoherence}
\label{sec:classical_bc}
The wedge boundary conditions $\psi(0) = \psi(\Phi) = 0$ impose definite constraints on the allowed modes. But what makes these constraints effective? The answer lies in the classical nature of the boundary itself, which emerges from decoherence.

\subsection{Why Boundaries Must Be Classical}
The walls of a physical wedge consist of a large ensemble of atoms, 
approximately $10^{23}$, each coupled to electromagnetic and phononic 
environments. Any quantum superposition of wall configurations would 
decohere on extremely short timescales, far shorter than any relevant 
particle dynamics~\cite{Joos1985}. This rapid decoherence is what makes 
the boundary ``classical'' and therefore capable of imposing a definite 
constraint.

To see why classicality is essential, consider what would happen if we attempted to place the wedge angle in superposition. Mathematically, different wedge angles $\Phi_1$ and $\Phi_2$ define different Hilbert spaces: $\mathcal{H}_{\Phi_1} = L^2([0,\Phi_1])$ and $\mathcal{H}_{\Phi_2} = L^2([0,\Phi_2])$ with their respective Dirichlet conditions. These are distinct spaces with different dimensions and different basis functions. One cannot canonically form superpositions of vectors belonging to different Hilbert spaces without specifying an embedding or identification map between them.

To describe a system where the geometry itself is quantum mechanical, one must work in a larger Hilbert space $\mathcal{H}_{\text{total}} = \mathcal{H}_{\text{particle}} \otimes \mathcal{H}_{\text{geometry}}$ that includes the geometric configuration as a quantum degree of freedom. In this enlarged space, a state of the form
\begin{equation}
\ket{\Psi} = \alpha \ket{\psi^{(1)}}\ket{\Phi_1} + \beta \ket{\psi^{(2)}}\ket{\Phi_2}
\label{eq:enlarged_space}
\end{equation}
is well-defined, where $\ket{\psi^{(i)}}$ represents the particle state appropriate to geometry $\Phi_i$. But this larger space has its own fixed structure, we have embedded the problem in a larger arena rather than making the boundary conditions themselves quantum mechanical. The boundary conditions of the total system remain classical.

\subsection{Decoherence and Effective Classicality}

The physical resolution comes from decoherence. When the wall configuration becomes entangled with environmental degrees of freedom, the coherence between different geometric configurations is rapidly suppressed. If the wall starts in a superposition and interacts with the environment, the evolution takes the form
\begin{equation}
\bigl(\alpha\ket{\Phi_1} + \beta\ket{\Phi_2}\bigr)\ket{E_0} \longrightarrow \alpha\ket{\Phi_1}\ket{E_1} + \beta\ket{\Phi_2}\ket{E_2},
\label{eq:wall_decoherence}
\end{equation}
where the environment states satisfy $\braket{E_1}{E_2} \approx 0$ after decoherence. Tracing over the environment yields a reduced density matrix for the wall that is approximately diagonal in the configuration basis:
\begin{equation}
\rho_{\text{wall}} \approx |\alpha|^2 \proj{\Phi_1} + |\beta|^2 \proj{\Phi_2}.
\label{eq:wall_reduced}
\end{equation}
A particle interacting with such a decohered wall experiences an effectively definite geometry. The interference terms that would allow the particle to ``see'' a superposition of boundary conditions are suppressed by the factor $\braket{E_1}{E_2}$, which for macroscopic walls is exponentially small in the number of environmental degrees of freedom.

\subsection{The Parallel to Measurement}

This structure parallels the quantum measurement process, but with an important distinction. The measurement apparatus, like the wedge wall, becomes effectively classical through decoherence. However, the interaction Hamiltonian alone does not cause decoherence, an additional ingredient is required.

Consider an electron spin in a magnetic field along $z$. The interaction Hamiltonian $H = \gamma S_z B$ commutes with $S_z$, so in some sense it ``selects'' the $S_z$ basis. But a superposition $\alpha|\uparrow\rangle + \beta|\downarrow\rangle$ simply precesses; no decoherence occurs, no measurement happens. The magnetic field, despite providing an interaction Hamiltonian, is not an environment, it has no internal degrees of freedom to become entangled with, and the dynamics is fully reversible.

Decoherence requires coupling to a system with many degrees of freedom whose detailed state becomes practically inaccessible. For the wedge wall, this is the electromagnetic and phononic environment of the wall atoms. For a measurement apparatus, this is whatever carries information away irreversibly---in circuit QED, it is the transmission line into which resonator photons leak.

The parallel can now be stated precisely. For the wedge boundary, the wall atoms decohere through coupling to their environment, leading to a definite geometric configuration, which in turn produces a definite mode structure. For the measurement apparatus, the pointer states decohere through coupling to an external environment, leading to definite pointer records, which in turn produces a definite branch structure. The interaction Hamiltonian determines \emph{which} basis is selected; the environmental coupling determines \emph{that} decoherence occurs. Both ingredients are necessary. In both cases, classicality, achieved through environmental decoherence, is what enables the imposition of definite constraints.

\subsection{What Decoherence Does and Does Not Explain}

This parallel illuminates basis selection but not outcome selection. Decoherence explains why the measurement has a definite basis (the pointer states that survive decoherence) just as the classical nature of the wall explains why the wedge has definite modes (the Dirichlet eigenfunctions compatible with the wall positions). However, neither decoherence nor the geometric analogy explains why a particular outcome occurs. After decoherence, the global state of system plus apparatus plus environment takes the form
\begin{equation}
\ket{\Psi} = \sum_n c_n \ket{n}\ket{A_n}\ket{E_n},
\label{eq:global_decohered}
\end{equation}
where $\braket{E_m}{E_n} \approx \delta_{mn}$. All branches exist in this state; decoherence has not eliminated any of them. The reduced density matrix of the system appears mixed, but the global state remains a pure superposition. The question of why an observer finds themselves in one branch rather than another is not addressed by decoherence and lies outside the scope of the geometric analogy developed here. The wedge case is importantly different: the cosine modes are not ``somewhere else'' in a larger state, they simply do not exist as elements of the Hilbert space $L^2([0,\Phi])$ with Dirichlet conditions. In the wedge, incompatible modes are absent from the chosen operator domain for that boundary value problem. In measurement, incompatible coherences remain in the global state but become operationally inaccessible under realistic environment decoherence. This distinction is precisely why the geometric analogy illuminates the structure of measurement (basis selection) without resolving its deepest puzzle (outcome selection).

\section{Measurement and Observable Selection}
\label{sec:measurement}

We now apply the boundary condition perspective to quantum measurement. A complete measurement process requires two distinct physical ingredients: an interaction Hamiltonian that determines which observable is measured, and coupling to an environment that causes decoherence. The interaction Hamiltonian alone does not constitute measurement, it merely creates entanglement. We examine both ingredients in the context of circuit QED dispersive readout.

\subsection{The Interaction Hamiltonian: Basis Selection}

Consider a qubit coupled to a resonator via dispersive readout~\cite{Blais2004, Blais2021}. The interaction Hamiltonian is
\begin{equation}
\Hint = \hbar\chi\, \sigma_z \otimes \hat{n},
\label{eq:H_int}
\end{equation}
where $\chi$ is the dispersive shift and $\hat{n} = a^\dagger a$ is the resonator photon number. This Hamiltonian commutes with $\sigma_z$:
\begin{equation}
[\Hint, \sigma_z] = 0,
\label{eq:commutator_Hz}
\end{equation}
but does not commute with $\sigma_x$ or $\sigma_y$:
\begin{equation}
[\Hint, \sigma_x] = 2i\hbar\chi\, \hat{n}\, \sigma_y \neq 0.
\label{eq:commutator_Hx}
\end{equation}

The commutation relations determine which basis the measurement will select: observables that commute with $\Hint$ are compatible with the interaction, while those that do not commute will become entangled with the apparatus. For dispersive readout, $\sigma_z$ is the compatible observable.

However, the interaction Hamiltonian alone does not cause decoherence. If the qubit-resonator system were perfectly isolated, a qubit initially in superposition $\alpha\ket{g} + \beta\ket{e}$ would evolve to an entangled state
\begin{equation}
\ket{\Psi(t)} = \alpha\ket{g}\ket{\psi_g(t)} + \beta\ket{e}\ket{\psi_e(t)},
\label{eq:entangled_state}
\end{equation}
where $\ket{\psi_g(t)}$ and $\ket{\psi_e(t)}$ are resonator states that acquire different phases. This entanglement is reversible: with sufficient control, one could disentangle the qubit and resonator, restoring the original superposition. No measurement has occurred, only reversible unitary evolution.

This situation is analogous to the electron spin in a magnetic field discussed in Section~\ref{sec:classical_bc}. The Hamiltonian $H = \gamma S_z B$ commutes with $S_z$, apparently ``selecting'' the $S_z$ basis, but a spin superposition simply precesses without decoherence. The magnetic field has no internal degrees of freedom to carry away information irreversibly.

\subsection{Environmental Coupling: Decoherence}

Decoherence requires coupling to a large environment whose detailed quantum state becomes practically inaccessible. In circuit QED, this environment is the transmission line to which the resonator is coupled. The transmission line supports a continuum of electromagnetic modes; when photons leak from the resonator into this continuum, they carry phase information that distinguishes $\ket{g}$ from $\ket{e}$.

The crucial physical process is photon emission into the line. Each photon that escapes carries information about the qubit-dependent resonator frequency: photons emitted when the qubit is in $\ket{g}$ have a slightly different frequency than those emitted when the qubit is in $\ket{e}$. As photons propagate down the transmission line, they become entangled with the qubit state:
\begin{equation}
\ket{\Psi} = \alpha\ket{g}\ket{E_g} + \beta\ket{e}\ket{E_e},
\label{eq:branched_state}
\end{equation}
where $\ket{E_g}$ and $\ket{E_e}$ represent the state of the electromagnetic field in the transmission line.

The key property is that $\braket{E_g}{E_e} \to 0$ as more photons leak out. This orthogonality arises because the field states differ in their frequency content, and the distinguishability accumulates with each emitted photon. Tracing over the transmission line modes yields a reduced density matrix for the qubit that is approximately diagonal in the $\sigma_z$ basis:
\begin{equation}
\rho_S = |\alpha|^2 \proj{g} + |\beta|^2 \proj{e}.
\label{eq:decohered_qubit}
\end{equation}
This represents two effectively autonomous branches, each correlated with a distinct record in the environment. The off-diagonal coherences $\alpha^*\beta\ket{e}\bra{g}$ are suppressed by the factor $\braket{E_g}{E_e}$, which decreases exponentially with the number of distinguishing photons.

\subsection{Conditions for Coherence Preservation}

Decoherence is not inevitable; it requires sufficient environmental coupling that produces distinguishable environmental states. Several physical scenarios illustrate conditions under which off-diagonal coherences persist.

If the resonator has no coupling to a transmission line (or equivalently, infinite quality factor), no photons leak out and the qubit-resonator system undergoes reversible unitary dynamics indefinitely. Similarly, if the qubit states produce identical resonator responses, then $\ket{E_g} = \ket{E_e}$ and $\braket{E_g}{E_e} = 1$; photons leaking out carry no which-path information, and coherence is preserved. This latter scenario is realized for odd-parity states under matched dispersive shifts (Section~\ref{sec:parity}): $\ket{ge}$ and $\ket{eg}$ produce the same resonator frequency, so the coherence $\braket{ge|\rho|eg}$ remains intact. Additionally, processes occurring faster than the characteristic dephasing rate $\Gamma_\phi$ can complete before significant decoherence accumulates, and techniques such as spin echo and dynamical decoupling can reverse environmental entanglement before it becomes irreversible.

These considerations confirm that the interaction Hamiltonian and environmental coupling are distinct ingredients. The Hamiltonian determines which basis would be selected if decoherence occurs; the environmental coupling determines whether and how rapidly decoherence proceeds.

\subsection{The Structural Parallel to Boundary Conditions}

With both ingredients identified, the parallel between measurement and boundary conditions can be stated precisely. In the wedge, Dirichlet conditions determine which modes are allowed: functions must vanish at $\phi = 0$ and $\phi = \Phi$. The operator $\Lz$ maps Dirichlet-satisfying functions outside the domain; the operator $\partial^2/\partial\phi^2$ preserves the domain and remains well-defined. In measurement, the interaction Hamiltonian $\Hint$ determines which eigenstates are compatible with the measurement process: eigenstates of observables commuting with $\Hint$ remain unentangled, while superpositions become entangled with the apparatus. Operators that do not commute with $\Hint$ mix these eigenstates and become operationally inaccessible once decoherence occurs.

The wedge imposes a kinematic constraint: incompatible modes do not exist in the Hilbert space. Measurement involves dynamic constraints: the interaction Hamiltonian selects the basis, and environmental coupling causes the decoherence that makes this selection irreversible. The structural parallel lies in the role of the constraint—determining which states are operationally stable—while the mechanism differs fundamentally. In the wedge, boundary conditions eliminate cosine modes from the physical Hilbert space. In measurement, all branches continue to exist in the total quantum state; they become operationally inaccessible through environmental entanglement, not ontologically absent. The analogy thus concerns the structure of basis selection rather than the mechanism of reduction to definite outcomes.

This analogy concerns local mode selection. The Dirichlet condition acts at the boundary; $\Hint$ acts at the system-apparatus interface; decoherence occurs through local environmental coupling. The framework does not address Bell nonlocality~\cite{Bell1964}, EPR correlations, or whether measurement involves nonlocal effects.

\section{Dispersive Readout}
\label{sec:dynamic_bc}

The preceding sections developed a geometric perspective on measurement through the analogy with boundary conditions. We now show that this perspective emerges naturally from the circuit physics of a transmon-terminated transmission line resonator. Starting from the classical Lagrangian, we derive the equations of motion, obtain the boundary condition at the qubit, and show how quantization of the transmon leads to a state-dependent boundary function whose pole structure encodes the qubit's transition frequencies.

\subsection{The Circuit Lagrangian}

Consider a transmission line resonator of length $L$ with distributed inductance $\ell$ per unit length and distributed capacitance $c$ per unit length. The phase velocity is $v = 1/\sqrt{\ell c}$ and the characteristic impedance is $Z_0 = \sqrt{\ell/c}$. These parameters satisfy $\ell = Z_0/v$ and $c = 1/(Z_0 v)$. The transmission line is grounded at $x = 0$ and terminated by a transmon qubit at $x = L$.

The dynamical variable is the flux field $\Phi(x,t)$, which relates to the voltage and current through $V(x,t) = \partial\Phi/\partial t$ and $I(x,t) = -(\partial\Phi/\partial x)/\ell$. These relations follow from Faraday's law and the constitutive relation for distributed inductance.

The transmission line stores energy in both its distributed capacitance, which contributes kinetic (electric) energy $T_{\text{TL}} = \frac{1}{2}\int_0^L c\,\dot{\Phi}^2\,dx$, and its distributed inductance, which contributes potential (magnetic) energy $U_{\text{TL}} = \frac{1}{2}\int_0^L (\partial_x\Phi)^2/\ell\,dx$. The transmon at $x = L$ consists of a Josephson junction with Josephson energy $E_J$ shunted by a capacitance $C_J$. Its flux variable $\Phi_J$ is constrained to equal $\Phi(L,t)$ by the circuit topology. The transmon contributes kinetic energy $T_J = \frac{1}{2}C_J\dot{\Phi}_J^2$ from its capacitance and potential energy $U_J = -E_J\cos(2\pi\Phi_J/\Phi_0)$ from the Josephson junction, where $\Phi_0 = h/(2e)$ is the flux quantum.

The total Lagrangian of the coupled system is therefore
\begin{multline}
\mathcal{L} = \int_0^L \left[\frac{c}{2}\dot{\Phi}^2 - \frac{1}{2\ell}\left(\frac{\partial\Phi}{\partial x}\right)^2\right]dx \\
+ \frac{C_J}{2}\dot{\Phi}(L,t)^2 + E_J\cos\left(\frac{2\pi\Phi(L,t)}{\Phi_0}\right).
\label{eq:total_lagrangian}
\end{multline}
This Lagrangian describes a distributed system (the transmission line) coupled to a lumped nonlinear element (the transmon) at its boundary. The coupling arises because the transmon flux $\Phi_J$ is identified with the transmission line flux at $x = L$.

\subsection{Equations of Motion and the Boundary Condition}

The equations of motion follow from the variational principle $\delta S = 0$ where $S = \int \mathcal{L}\,dt$. Varying with respect to $\Phi(x,t)$ in the bulk ($0 < x < L$) yields the wave equation
\begin{equation}
\frac{\partial^2\Phi}{\partial t^2} = v^2\frac{\partial^2\Phi}{\partial x^2},
\label{eq:wave_eqn}
\end{equation}
which governs the propagation of electromagnetic waves along the transmission line.

At $x = 0$, we impose a short circuit to ground, giving the Dirichlet boundary condition $\Phi(0,t) = 0$. At $x = L$, varying the action with respect to $\Phi(L,t)$ and setting the total variation to zero gives
\begin{multline}
\frac{1}{\ell}\frac{\partial\Phi}{\partial x}\bigg|_{x=L} + C_J\frac{\partial^2\Phi(L,t)}{\partial t^2} \\
- \frac{2\pi E_J}{\Phi_0}\sin\left(\frac{2\pi\Phi(L,t)}{\Phi_0}\right) = 0.
\label{eq:bc_nonlinear}
\end{multline}
This equation expresses current conservation at the boundary: the current delivered by the transmission line equals the current absorbed by the transmon capacitor plus the Josephson supercurrent.

\subsection{Linearization and the Harmonic Approximation}

For small flux amplitudes satisfying $|2\pi\Phi_J/\Phi_0| \ll 1$, we linearize to obtain
\begin{equation}
\frac{1}{\ell}\frac{\partial\Phi}{\partial x}\bigg|_{x=L} + C_J\ddot{\Phi}(L,t) + \frac{\Phi(L,t)}{L_J} = 0,
\label{eq:bc_linear}
\end{equation}
where $L_J = \Phi_0^2/(4\pi^2 E_J)$ is the Josephson inductance. In this approximation, the transmon behaves as an $LC$ oscillator with plasma frequency $\omega_p = 1/\sqrt{L_J C_J}$.

For harmonic time dependence $\Phi(x,t) = \phi(x)e^{-i\omega t}$, the wave equation reduces to
\begin{equation}
-\phi''(x) = \frac{\omega^2}{v^2}\phi(x) \equiv \lambda\phi(x),
\label{eq:SL_bulk}
\end{equation}
where $\lambda = \omega^2/v^2$ is the eigenparameter with dimensions of inverse length squared. The boundary condition at $x = L$ becomes
\begin{equation}
\frac{\phi'(L)}{\phi(L)} = \beta\lambda - \gamma,
\label{eq:bc_affine}
\end{equation}
where $\beta = C_J/c$ is an effective length (the length of transmission line whose capacitance equals $C_J$), and $\gamma = \ell/L_J$ has dimensions of inverse length. The dimensionless ratio $\beta/L = C_J/(cL)$ compares the junction capacitance to the total line capacitance. This boundary condition is affine in $\lambda$, placing the problem in the class studied by Fulton and Walter~\cite{Fulton1977, Walter1973}.

\subsection{The Quantum Transmon and Linear Response}

The transmon Hamiltonian is $\hat{H}_J = 4E_C\hat{n}^2 - E_J\cos\hat{\varphi}$, where $E_C = e^2/(2C_J)$ is the charging energy. In the transmon regime $E_J/E_C \gg 1$, the eigenstates can be labeled $|g\rangle$, $|e\rangle$, $|f\rangle$, etc. The transition frequencies are $\omega_q \equiv \omega_{ge} \approx \omega_p - E_C/\hbar$ and $\omega_{ef} = \omega_q + \alpha$, where $\alpha = -E_C/\hbar < 0$ is the anharmonicity.

The quantum linear response of the transmon to an oscillating flux produces a state-dependent susceptance $B_q^{(n)}(\omega)$, defined through $Y_q^{(n)}(\omega) = iB_q^{(n)}(\omega)$ for the lossless case. The susceptance has a geometric capacitive contribution $\omega C_J$ plus a dynamical contribution with poles at the transition frequencies $\pm\omega_{mn}$. Standard circuit QED analysis~\cite{Blais2021, Koch2007} gives the vacuum Rabi coupling
\begin{equation}
g = |Q_{ge}|\sqrt{\frac{\omega_r}{2\hbar(cL)}},
\label{eq:g_definition}
\end{equation}
where $Q_{ge} = \langle g|\hat{Q}|e\rangle$ is the charge matrix element and $cL$ is the total resonator capacitance.

\subsection{The State-Dependent Boundary Function}
\label{sec:boundary_function}

Current conservation at $x = L$ relates the transmission line current to the transmon response. For harmonic time dependence at frequency $\omega$, the current into the transmon is $I = Y_q^{(n)}(\omega) V(L)$, where $Y_q^{(n)}(\omega) = iB_q^{(n)}(\omega)$ is the state-dependent admittance and $V(L) = -i\omega\phi(L)$ is the voltage amplitude. The transmission line delivers current $I = -\phi'(L)/\ell$. Equating these:
\begin{equation}
-\frac{1}{\ell}\phi'(L) = iB_q^{(n)}(\omega) \cdot (-i\omega\phi(L)) = \omega B_q^{(n)}(\omega)\phi(L).
\end{equation}
Rearranging gives the boundary condition
\begin{equation}
\frac{\phi'(L)}{\phi(L)} = -\ell\omega B_q^{(n)}(\omega).
\label{eq:bc_from_admittance}
\end{equation}

The susceptance $B_q^{(n)}(\omega)$ consists of a geometric capacitive term plus a dynamical response from virtual transitions. In the \emph{linear response approximation}, the dynamical part is computed from the Kubo formula for small perturbations around the equilibrium state $|n\rangle$. This approximation is valid when the flux amplitude at the boundary satisfies $|2\pi\Phi(L)/\Phi_0| \ll 1$, or equivalently when the resonator photon number is much smaller than the critical photon number $n_{\text{crit}} = \Delta^2/(4g^2)$.

The linear response susceptance has poles at the transition frequencies $\omega = \pm\omega_{mn}$ for all $m \neq n$. Near a pole, the response is dominated by the resonant transition. The full frequency dependence takes the form
\begin{equation}
B_q^{(n)}(\omega) = \omega C_J + \sum_{m \neq n}\frac{A_{nm}\,\omega}{\omega_{mn}^2 - \omega^2},
\label{eq:susceptance_full}
\end{equation}
where $A_{nm}$ depends on the charge matrix elements and transition frequencies. Substituting into Eq.~\eqref{eq:bc_from_admittance} and converting to the eigenparameter $\lambda = \omega^2/v^2$:
\begin{equation}
\frac{\phi'(L)}{\phi(L)} = -\ell v^2\lambda C_J - \sum_{m \neq n}\frac{\ell v^2 A_{nm}\lambda}{\omega_{mn}^2 - v^2\lambda}.
\label{eq:bc_full_lambda}
\end{equation}

To obtain a rational boundary function suitable for Sturm-Liouville analysis, we make a second approximation: the \emph{pole-dominated approximation}. In the resonant denominators $(\omega_{mn}^2 - \omega^2)$, we retain the exact $\lambda$ dependence since this controls the pole structure. However, in the numerator factors where $\lambda$ appears as a slowly-varying prefactor, we replace $\lambda \approx \lambda_r$ where $\lambda_r = \omega_r^2/v^2$ is a reference value near the resonator frequency. This approximation is controlled by the small parameter $|\omega - \omega_r|/\omega_r \ll 1$ and is valid throughout the dispersive regime where we seek eigenfrequencies.

Under this approximation, the boundary function becomes
\begin{equation}
F^{(n)}(\lambda) \equiv \frac{\phi'(L)}{\phi(L)} = -\beta\lambda + \sum_{m \neq n}\frac{\delta_{nm}}{\lambda - \lambda_{nm}},
\label{eq:F_rational}
\end{equation}
where $\beta = C_J/c$ is an effective length, $\lambda_{nm} = \omega_{nm}^2/v^2$ are the pole locations, and the residues $\delta_{nm}$ absorb the prefactors evaluated at the reference frequency. This rational form---affine in $\lambda$ plus simple poles---places the problem in the class of Sturm-Liouville problems with eigenparameter-dependent boundary conditions studied by Fulton and Walter~\cite{Fulton1977, Walter1973}.

To summarize, two distinct approximations lead to the rational boundary function~\eqref{eq:F_rational}:
\begin{enumerate}
\item \textbf{Linear response:} The transmon response is computed to first order in the perturbing flux, valid for $\bar{n} \ll n_{\text{crit}}$.
\item \textbf{Pole-dominated expansion:} Slowly-varying numerator factors are evaluated at a reference frequency $\omega_r$, valid for $|\omega - \omega_r| \ll \omega_r$.
\end{enumerate}
The first approximation is the standard linearization underlying all dispersive readout theory. The second is an additional simplification that renders the boundary function rational in $\lambda$, enabling the application of Fulton-Walter spectral theory. Both approximations are well-satisfied in typical circuit QED experiments operating in the dispersive regime. The residues $\delta_{nm}$ have dimensions of inverse length cubed (since $F$ has dimensions $[\text{L}^{-1}]$ and the pole terms have $[\delta]/[\lambda] = [\delta]/[\text{L}^{-2}]$). Their values are not immediately apparent from the susceptance expression~\eqref{eq:susceptance_full} due to the approximations involved. Instead, we determine them by matching to a known physical result: the vacuum Rabi splitting at resonance. This matching procedure, carried out in Section~\ref{sec:spectral}, yields
\begin{equation}
\delta_{nm} = \frac{2L g_{nm}^2 \omega_{nm}^2}{v^4},
\label{eq:residue_from_matching}
\end{equation}
where $g_{nm}$ is the vacuum Rabi coupling for the $n \leftrightarrow m$ transition.

The sign of $\delta_{nm}$ equals the sign of $\omega_{mn} = (E_m - E_n)/\hbar$: positive for absorption ($E_m > E_n$) and negative for emission ($E_m < E_n$). For the ground state $|g\rangle$, all accessible transitions are absorptive, so all residues are positive. For the excited state $|e\rangle$, the residue $\delta_{eg} = -\delta_{ge} < 0$ corresponds to stimulated emission to the ground state, while $\delta_{ef} > 0$ corresponds to absorption to the second excited state. The harmonic oscillator matrix element scaling $|Q_{ef}| \approx \sqrt{2}|Q_{ge}|$, valid for weakly anharmonic transmons, implies $|\delta_{ef}| \approx 2|\delta_{ge}|$.

\subsection{The Extended Hilbert Space}

Sturm-Liouville problems with $M$ poles in the boundary function require an extended Hilbert space
\begin{equation}
\mathcal{H}_{\text{ext}} = L^2(0,L) \oplus \mathbb{C}^M
\label{eq:extended_hilbert}
\end{equation}
to accommodate the boundary dynamics~\cite{Fulton1977, Guliyev2025}. The $L^2(0,L)$ factor describes the flux field profile, while $\mathbb{C}^M$ represents boundary amplitudes $\xi_k$ associated with each pole, satisfying $\xi_k = \sqrt{|\delta_k|}\phi(L)/(\lambda - \lambda_k)$. This extended space is a mathematical device for solving the eigenvalue problem with a fixed transmon state $|n\rangle$. It should be distinguished from the physical Hilbert space $\mathcal{H}_{\text{phys}} = \mathcal{H}_{\text{res}} \otimes \mathcal{H}_{\text{transmon}}$ of the coupled quantum system. The boundary amplitudes $\xi_k$ represent virtual excitation amplitudes of transitions, not probability amplitudes for transmon states. Different transmon eigenstates define different boundary functions, hence different eigenvalue problems, which is what enables dispersive readout.

The extended Hilbert space $\mathcal{H}_{\mathrm{ext}}^{(n)} = L^2(0,L) \oplus \mathbb{C}^M$ invites comparison with the restricted Hilbert space of the wedge geometry, as both represent mathematical encodings of physical constraints. In the wedge, Dirichlet boundary conditions impose a kinematic constraint: the physical Hilbert space is $L^2([0,\Phi])$ 
with functions vanishing at the boundaries, and the domain restriction to 
sine modes is the natural encoding of this geometric fact. In dispersive 
readout, the transmon imposes a dynamical constraint: the boundary 
admittance depends on the eigenvalue $\lambda = \omega^2/v^2$, and the 
Hilbert space extension to $L^2 \oplus \mathbb{C}^M$ is the natural encoding of this frequency-dependent response within Sturm-Liouville theory. Neither formulation is more fundamental; each is the appropriate mathematical framework for its physical context. The $\mathbb{C}^M$ components are not additional physical degrees of freedom but auxiliary variables that capture the boundary's spectral structure at each pole of the admittance function. Despite these procedural differences, the functional role is identical: given a definite constraint (wall position or transmon state), the boundary condition selects which modes are compatible with the system.

\subsection{Derivation of the Dispersive Shift}

We adopt the standard convention $\Delta \equiv \omega_q - \omega_r$ throughout, with $\Delta < 0$ for a qubit below the resonator. The dressed mode frequencies satisfy $G(\lambda) = F^{(n)}(\lambda)$, where $G(\lambda) = \sqrt{\lambda}\cot(\sqrt{\lambda}L)$. Near a bare resonator mode at $\lambda_r = \omega_r^2/v^2$, perturbation theory gives the frequency shift
\begin{equation}
\delta\omega^{(n)} = -\frac{v^2}{\omega_r L} F^{(n)}(\lambda_r).
\label{eq:freq_shift}
\end{equation}

For the ground state with a single dominant pole at $\lambda_q$:
\begin{equation}
\delta\omega^{(g)} = -\frac{v^2}{\omega_r L} \cdot \frac{\delta_{ge}}{\lambda_r - \lambda_q}.
\end{equation}
Converting to frequency using $\lambda_r - \lambda_q = (\omega_r^2 - \omega_q^2)/v^2 \approx -2\omega_r\Delta/v^2$:
\begin{equation}
\delta\omega^{(g)} = \frac{v^4 \delta_{ge}}{2\omega_r^2 L \Delta}.
\end{equation}

For the excited state, including both $e \to g$ (emission, $\delta_{eg} = -\delta_{ge}$) and $e \to f$ (absorption, $\delta_{ef} \approx 2\delta_{ge}$):
\begin{equation}
\delta\omega^{(e)} = \frac{v^4 \delta_{ge}}{2\omega_r^2 L}\left[-\frac{1}{\Delta} + \frac{2}{\Delta + \alpha}\right].
\end{equation}

The dispersive shift is half the frequency difference:
\begin{align}
2\chi &= \delta\omega^{(e)} - \delta\omega^{(g)} \nonumber\\
&= \frac{v^4 \delta_{ge}}{2\omega_r^2 L}\left[-\frac{2}{\Delta} + \frac{2}{\Delta + \alpha}\right] \nonumber\\
&= \frac{v^4 \delta_{ge}}{\omega_r^2 L} \cdot \frac{-\alpha}{\Delta(\Delta + \alpha)}.
\end{align}

Substituting $\delta_{ge} = 2Lg^2\omega_q^2/v^4$ from Eq.~\eqref{eq:residue_from_matching} and using $\omega_q \approx \omega_r$ in the dispersive regime:
\begin{equation}
\chi = \frac{g^2\alpha}{\Delta(\Delta + \alpha)}.
\label{eq:chi_final}
\end{equation}
This is the standard dispersive shift formula~\cite{Koch2007, Blais2021}. For $\alpha < 0$ and $\Delta < 0$, we have $\chi < 0$: the resonator frequency is lower when the qubit is excited.

\subsection{Approximations and Regime of Validity}

The derivation relied on several approximations. The linearization of the Josephson nonlinearity requires $\bar{n} \ll n_{\text{crit}} = \Delta^2/(4g^2)$. The dispersive regime requires $|g| \ll |\Delta|$. The transmon level truncation is valid when higher transitions are sufficiently detuned; for typical parameters with $|\alpha|/2\pi \sim 200$~MHz and $|\Delta|/2\pi \sim 500$~MHz, the three-level model captures the dominant physics. The single-mode approximation neglects higher resonator modes, which contribute to renormalization effects discussed in Section~\ref{sec:spectral}.


\section{Spectral Structure and Multimode Physics}
\label{sec:spectral}

The boundary condition derived in the previous section leads naturally to a transcendental eigenvalue problem whose solutions determine the dressed mode frequencies of the coupled system. This section analyzes the spectral structure, proves a level repulsion theorem, discusses the vacuum Rabi splitting at resonance, and addresses the multimode physics that becomes important when all resonator modes are included.

\subsection{The Eigenvalue Equation}

The dressed mode frequencies satisfy the transcendental equation
\begin{equation}
G(\lambda) = F^{(n)}(\lambda),
\label{eq:eigenvalue_equation}
\end{equation}
where $\lambda = \omega^2/v^2$ is the squared wavenumber, $v$ is the phase velocity, and $L$ is the resonator length. The function
\begin{equation}
G(\lambda) = \sqrt{\lambda}\cot(\sqrt{\lambda}\,L)
\label{eq:G_def}
\end{equation}
encodes the bare resonator boundary condition at the grounded end, while $F^{(n)}(\lambda)$ encodes the transmon's state-dependent response at the coupling end. This equation arises from requiring nontrivial solutions to the wave equation $\partial_x^2\phi = -\lambda\phi$ subject to $\phi(0) = 0$ at the grounded end and the frequency-dependent boundary condition $\phi'(L)/\phi(L) = F^{(n)}(\lambda)$ at the transmon. Figure~\ref{fig:eigenvalue_graphical} illustrates the graphical solution: the dressed eigenvalues occur at intersections of $G(\lambda)$ and $F^{(n)}(\lambda)$, with the monotonicity properties of each function ensuring exactly one solution between consecutive poles.

\begin{figure}[t]
\centering
\includegraphics[width=\columnwidth]{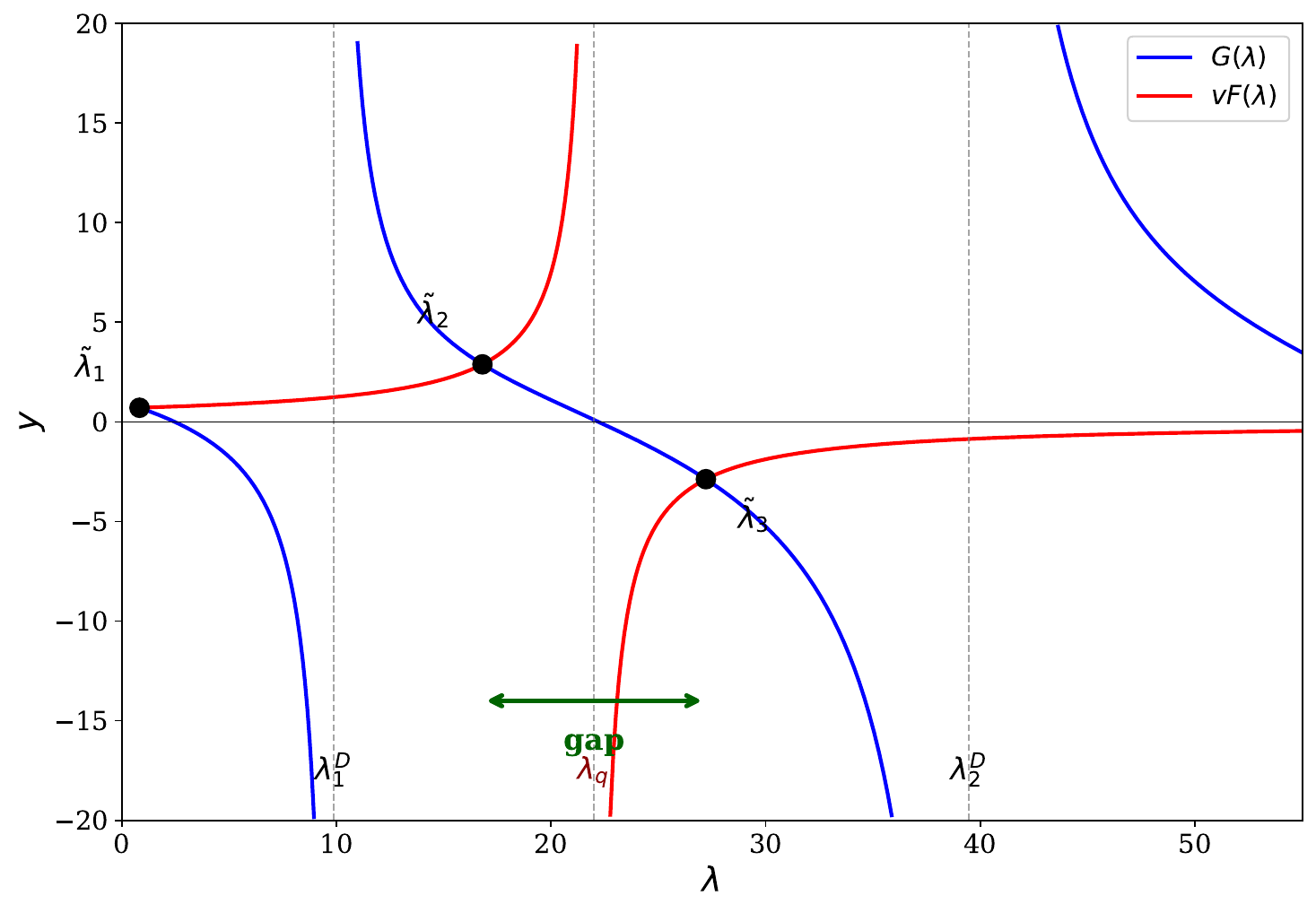}
\caption{Graphical solution of the eigenvalue equation $G(\lambda) = F^{(g)}(\lambda)$ 
for a transmon in the ground state. The function $G(\lambda)$ (blue) has poles at Dirichlet eigenvalues $\lambda_n^D$ and decreases monotonically between them. The boundary function $F^{(g)}(\lambda)$ (red) has a pole at the qubit transition $\lambda_q$ with positive residue. Dressed eigenvalues $\tilde{\lambda}_1, \tilde{\lambda}_2, \tilde{\lambda}_3$ (black dots) occur at intersections. For the ground state, where all residues are positive, exactly one eigenvalue lies between each consecutive pair of poles. The level repulsion theorem guarantees that no eigenvalue coincides with $\lambda_q$; near resonance, the two neighboring eigenvalues are separated by a gap corresponding to the vacuum Rabi splitting.}
\label{fig:eigenvalue_graphical}
\end{figure}

\subsection{Properties of the Resonator Function $G(\lambda)$}

The function $G(\lambda)$ has poles where $\sin(\sqrt{\lambda}\,L) = 0$, occurring at the Dirichlet eigenvalues
\begin{equation}
\lambda_k^D = \left(\frac{k\pi}{L}\right)^2, \quad k = 1, 2, 3, \ldots
\end{equation}
Between consecutive Dirichlet poles, $G(\lambda)$ has exactly one zero at the bare quarter-wave frequencies $\lambda_k^{(0)} = ((2k-1)\pi/(2L))^2$.

At each pole $\lambda_k^D$, the function $G(\lambda) \to +\infty$ as $\lambda \to (\lambda_k^D)^+$ and $G(\lambda) \to -\infty$ as $\lambda \to (\lambda_k^D)^-$. Between consecutive poles, $G(\lambda)$ is strictly decreasing, which can be verified by showing that $d(\xi\cot\xi)/d\xi < 0$ for $\xi > 0$ away from poles.

The derivative at a bare quarter-wave frequency is
\begin{equation}
G'(\lambda_k^{(0)}) = -\frac{L}{2},
\label{eq:G_prime}
\end{equation}
a result that follows from evaluating $(d/d\xi)(\xi\cot\xi)$ at $\xi = (2k-1)\pi/2$ where $\cot\xi = 0$ and $\csc^2\xi = 1$.

\subsection{Properties of the Boundary Function}

The boundary function for a transmon in state $|n\rangle$ has poles at the transition frequencies $\lambda_{nm} = \omega_{nm}^2/v^2$ for all $m \neq n$. The residue at each pole has the same sign as $\omega_{nm}$: positive for absorption ($E_m > E_n$) and negative for emission ($E_m < E_n$).

For the ground state $|g\rangle$, all accessible transitions are absorptive, so all residues are positive. For excited states, both positive and negative residues appear, corresponding to absorption to higher levels and emission to lower levels respectively.

\subsection{Level Repulsion}

The pole structure of both $G(\lambda)$ and $F^{(n)}(\lambda)$ leads to a fundamental constraint on the spectrum.

\begin{theorem}[Level Repulsion]
\label{thm:level_repulsion}
No eigenvalue of the coupled system can coincide with a transmon transition frequency. That is, if $\lambda^*$ satisfies $G(\lambda^*) = F^{(n)}(\lambda^*)$, then $\lambda^* \neq \lambda_{nm}$ for any $m \neq n$.
\end{theorem}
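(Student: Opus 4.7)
The plan is to prove the theorem by direct contradiction, exploiting the pole structure of $F^{(n)}$ established in Eq.~\eqref{eq:F_rational}. The essential input is that every transition frequency $\lambda_{nm}$ is, by construction, a simple pole of $F^{(n)}$ with residue $\delta_{nm}$, and $\delta_{nm}$ is nonzero whenever the charge matrix element $Q_{nm}$ does not vanish, i.e., for every transition the quantum linear response actually sees. Hence $F^{(n)}$ blows up at each $\lambda_{nm}$, and for the eigenvalue equation $G(\lambda^*) = F^{(n)}(\lambda^*)$ to be satisfied at such a point, either $G$ must remain finite there (an immediate contradiction) or $G$ must supply a perfectly matching singularity. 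The task reduces to ruling out both possibilities.

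First I would fix $m_0 \neq n$ and assume for contradiction that $\lambda^* = \lambda_{nm_0}$ is a dressed eigenvalue. Decomposing $F^{(n)}(\lambda) = \delta_{nm_0}/(\lambda - \lambda_{nm_0}) + R(\lambda)$ with $R$ regular at $\lambda_{nm_0}$ immediately gives $|F^{(n)}(\lambda)| \to \infty$ as $\lambda \to \lambda_{nm_0}$. The resonator function $G$ has poles only at the Dirichlet eigenvalues $\lambda_k^D = (k\pi/L)^2$ and is smooth elsewhere, so in the generic case where $\lambda_{nm_0} \neq \lambda_k^D$ for every $k$, the value $G(\lambda_{nm_0})$ is finite and the eigenvalue equation cannot be satisfied at $\lambda_{nm_0}$. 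This covers the physically relevant situation in which the qubit is detuned from every bare resonator mode.

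The main obstacle is the accidental coincidence $\lambda_{nm_0} = \lambda_k^D$, where both $G$ and $F^{(n)}$ diverge and one must compare Laurent residues. From $\cot(\sqrt{\lambda}L) \sim 1/[L(\sqrt{\lambda} - \sqrt{\lambda_k^D})]$ near the pole, a short calculation gives $G(\lambda) \sim 2\lambda_k^D/[L(\lambda - \lambda_k^D)]$, so the residue of $G$ is $2\lambda_k^D/L$. Cancellation of the singular parts of $G - F^{(n)}$ would require $\delta_{nm_0} = 2\lambda_k^D/L$; inserting the physical residue $\delta_{nm_0} = 2L g_{nm_0}^2 \omega_{nm_0}^2/v^4$ from Eq.~\eqref{eq:residue_from_matching} together with $\lambda_k^D = \omega_k^2/v^2$ reduces this to $g_{nm_0} L \omega_{nm_0} = v\omega_k$, i.e., $g_{nm_0} \sim v/L \sim \omega_k/\pi$ at resonance — an order-unity coupling, far outside the dispersive regime $g \ll \omega_r$ assumed throughout. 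Thus the singular parts cannot cancel, $G - F^{(n)}$ remains divergent at $\lambda_{nm_0}$, and no dressed eigenvalue can land there. Combining both cases completes the proof.
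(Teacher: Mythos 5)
Your core argument is exactly the paper's: at $\lambda = \lambda_{nm_0}$ the boundary function $F^{(n)}$ has a simple pole and diverges, while $G$ is finite there provided $\lambda_{nm_0}$ is not a Dirichlet eigenvalue, so the equation $G = F^{(n)}$ cannot hold. The paper stops at this point, disposing of the accidental coincidence $\lambda_{nm_0} = \lambda_k^D$ by simply declaring it non-generic. You go further and actually analyze that degenerate case by comparing Laurent residues: your computation $G(\lambda) \sim 2\lambda_k^D/[L(\lambda - \lambda_k^D)]$ is correct, and the resulting cancellation condition $\delta_{nm_0} = 2\lambda_k^D/L$ does reduce, via Eq.~\eqref{eq:residue_from_matching}, to a coupling of order $v/L \sim \omega_k$, which is indeed excluded in the dispersive regime. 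Two caveats on that extra step: (i) it is mildly circular, since the residue formula~\eqref{eq:residue_from_matching} was itself obtained by matching in the weak-coupling regime, so you are ruling out strong coupling using a formula derived under weak coupling --- acceptable as a consistency check, but worth flagging; and (ii) at a common pole the transcendental form $G = F^{(n)}$ is not literally defined (it presupposes $\phi(L) \neq 0$, which fails at a Dirichlet eigenvalue), so a fully rigorous treatment of the coincidence case should return to the underlying boundary value problem rather than to residue cancellation in $G - F^{(n)}$. Neither caveat affects the generic case, where your proof coincides with the paper's and is complete.
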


\begin{proof}
At $\lambda = \lambda_{nm}$, the boundary function $F^{(n)}(\lambda)$ diverges due to its pole, while $G(\lambda_{nm})$ remains finite under the generic assumption that no transmon transition frequency coincides with a Dirichlet eigenvalue of the bare resonator. Therefore the equation $G(\lambda_{nm}) = F^{(n)}(\lambda_{nm})$ cannot be satisfied.
\end{proof}

This level repulsion implies that as a bare resonator mode is tuned toward a transmon transition frequency, the dressed eigenvalue approaches the transition frequency asymptotically but cannot cross it. The eigenvalue must curve away, producing the characteristic avoided crossing observed in spectroscopy.

For the ground state where all residues are positive, a stronger result holds: between any two consecutive poles in the combined set of Dirichlet eigenvalues and transmon transitions, exactly one dressed eigenvalue exists. This interlacing property provides a global constraint on the spectrum. For excited states with mixed-sign residues, the monotonicity properties that guarantee interlacing may fail, and a more detailed analysis is required to determine the number of eigenvalues in each interval.

\subsection{Vacuum Rabi Splitting}

When the qubit frequency is tuned to resonance with a bare resonator mode, $\omega_q = \omega_r$, the level repulsion theorem guarantees that no dressed eigenvalue equals the common frequency. Instead, two dressed modes appear symmetrically displaced from resonance. We now compute this vacuum Rabi splitting and use it to determine the residue formula.

Consider the single-pole approximation where only the $g \leftrightarrow e$ transition is retained, with the qubit tuned to resonance with the fundamental mode: $\lambda_q = \lambda_r^{(0)}$. At this point, $G(\lambda_r^{(0)}) = 0$ (the bare open-circuit condition) and the boundary function reduces to
\begin{equation}
F(\lambda) \approx -\beta\lambda + \frac{\delta}{\lambda - \lambda_q},
\end{equation}
where we write $\delta \equiv \delta_{ge}$ for brevity.

Near $\lambda = \lambda_q$, we expand $G(\lambda)$ to first order:
\begin{equation}
G(\lambda) \approx G'(\lambda_q)(\lambda - \lambda_q) = -\frac{L}{2}(\lambda - \lambda_q),
\end{equation}
using $G'(\lambda_r^{(0)}) = -L/2$ from Eq.~\eqref{eq:G_prime}. The eigenvalue equation $G(\lambda) = F(\lambda)$ becomes
\begin{equation}
-\frac{L}{2}(\lambda - \lambda_q) = -\beta\lambda_q + \frac{\delta}{\lambda - \lambda_q},
\label{eq:eigenvalue_near_resonance}
\end{equation}
where we have used $-\beta\lambda \approx -\beta\lambda_q$ for $\lambda$ near $\lambda_q$.

Let $\epsilon = \lambda - \lambda_q$. Equation~\eqref{eq:eigenvalue_near_resonance} becomes
\begin{equation}
-\frac{L}{2}\epsilon = -\beta\lambda_q + \frac{\delta}{\epsilon}.
\end{equation}
Multiplying by $\epsilon$ and rearranging:
\begin{equation}
\frac{L}{2}\epsilon^2 - \beta\lambda_q\epsilon - \delta = 0.
\end{equation}
The solutions are
\begin{equation}
\epsilon_\pm = \frac{\beta\lambda_q \pm \sqrt{\beta^2\lambda_q^2 + 2L\delta}}{L}.
\end{equation}
For weak coupling where $2L\delta \ll \beta^2\lambda_q^2$, the two roots are far apart: one near $2\beta\lambda_q/L$ (a large shift) and one near $-\delta/(\beta\lambda_q)$ (a small shift). This limit does not produce symmetric splitting.

The symmetric splitting characteristic of vacuum Rabi oscillations occurs when the pole term dominates over the affine term near resonance. In this regime, $|\delta/\epsilon| \gg \beta\lambda_q$, and the eigenvalue equation simplifies to
\begin{equation}
-\frac{L}{2}\epsilon \approx \frac{\delta}{\epsilon},
\end{equation}
giving
\begin{equation}
\epsilon^2 = -\frac{2\delta}{L}.
\end{equation}
For $\delta > 0$ (ground state, absorption), this equation has no real solutions, the dressed eigenvalues do not cross $\lambda_q$. The physical solutions lie on either side of the pole, found by analyzing the full equation in the intervals $(\lambda_{k-1}^D, \lambda_q)$ and $(\lambda_q, \lambda_k^D)$ where continuity arguments guarantee one root each.

To find the splitting, we note that near the pole, the dominant balance gives $|\epsilon| \sim \sqrt{2\delta/L}$. Converting to frequency using $\lambda = \omega^2/v^2$ and $\epsilon = (\omega^2 - \omega_q^2)/v^2 \approx 2\omega_q\delta\omega/v^2$ for small frequency shifts:
\begin{equation}
\delta\omega \approx \pm\frac{v^2}{2\omega_q}\sqrt{\frac{2\delta}{L}}.
\end{equation}
The total splitting between the two dressed modes is
\begin{equation}
\Delta\omega = 2|\delta\omega| = \frac{v^2}{\omega_q}\sqrt{\frac{2\delta}{L}}.
\label{eq:splitting_from_delta}
\end{equation}

The Jaynes-Cummings model predicts a vacuum Rabi splitting of $\Delta\omega = 2g$ at resonance. Equating this to Eq.~\eqref{eq:splitting_from_delta}:
\begin{equation}
2g = \frac{v^2}{\omega_q}\sqrt{\frac{2\delta}{L}}.
\end{equation}
Solving for the residue:
\begin{equation}
\delta = \frac{2Lg^2\omega_q^2}{v^4}.
\label{eq:residue_derived}
\end{equation}
This is the residue formula quoted in Section~\ref{sec:boundary_function}. At resonance $\omega_q = \omega_r$, it can be written as
\begin{equation}
\delta_{ge} = \frac{2Lg^2\omega_r^2}{v^4} = \frac{\omega_r^3|Q_{ge}|^2}{2\hbar v^3 Z_0},
\label{eq:residue_explicit}
\end{equation}
where the second equality uses $g^2 = \omega_r|Q_{ge}|^2/(2\hbar cL)$ and $cL = L/(vZ_0)$.

The derivation confirms that the Sturm-Liouville formulation reproduces the Jaynes-Cummings vacuum Rabi splitting when the residue is given by Eq.~\eqref{eq:residue_derived}. This matching procedure determines the otherwise-unknown prefactors that arose from the pole-dominated approximation in Section~\ref{sec:boundary_function}. For transitions other than $g \leftrightarrow e$, the same analysis applies with the appropriate coupling $g_{nm}$ and transition frequency $\omega_{nm}$:
\begin{equation}
\delta_{nm} = \frac{2Lg_{nm}^2\omega_{nm}^2}{v^4}.
\label{eq:residue_general}
\end{equation}
The sign convention $\delta_{nm} > 0$ for $\omega_{nm} > 0$ (absorption) and $\delta_{nm} < 0$ for $\omega_{nm} < 0$ (emission) follows from the structure of the linear response susceptance.

\subsection{Relation to the Jaynes-Cummings Model}

\begin{figure}[t]
\centering
\includegraphics[width=\columnwidth]{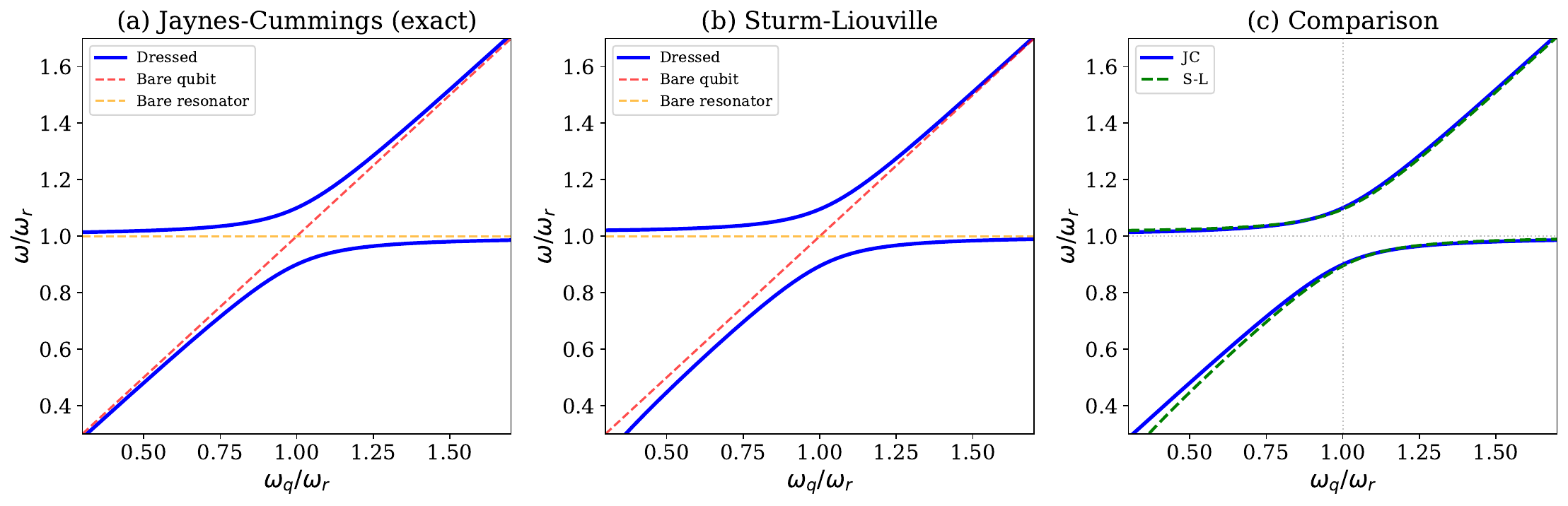}
\caption{Comparison of dressed mode frequencies from (a) the Jaynes-Cummings model and (b) the Sturm-Liouville formulation as the qubit frequency $\omega_q$ is tuned through resonance with the bare resonator mode $\omega_r$. Both exhibit the characteristic avoided crossing with gap $2g$ at resonance. Panel (c) overlays the two calculations, showing close agreement when the residue is matched to the coupling via Eq.~\eqref{eq:residue_derived}. The coupling strength is $g/\omega_r = 0.15$.}
\label{fig:SL_JC_comparison}
\end{figure}

The Jaynes-Cummings Hamiltonian
\begin{equation}
H_{\text{JC}} = \hbar\omega_r a^\dagger a + \frac{\hbar\omega_q}{2}\sigma_z + \hbar g(a^\dagger\sigma^- + a\sigma^+)
\end{equation}
employs the rotating-wave approximation (RWA), which neglects the counter-rotating terms $a\sigma^-$ and $a^\dagger\sigma^+$. These terms contribute corrections of order $g^2/(\omega_q + \omega_r)$, known as Bloch-Siegert shifts. Figure~\ref{fig:SL_JC_comparison} compares the dressed mode frequencies obtained from the Jaynes-Cummings model and the Sturm-Liouville formulation as the qubit frequency is tuned through resonance; the two approaches yield closely matching avoided crossings when the residue is related to the coupling strength via Eq.~\eqref{eq:residue_derived}.

The boundary-condition formulation developed here does not make the RWA at the level of the wave equation. However, the dispersive-regime approximation used in deriving the rational boundary function $F(\lambda)$ introduces simplifications that may affect the treatment of counter-rotating contributions. A complete analysis of Bloch-Siegert corrections within this framework would require retaining the full frequency dependence of the transmon susceptibility without the dispersive approximation, which is beyond the scope of this paper. For typical circuit QED parameters where $g/\omega_r \sim 0.01$--$0.1$, the Bloch-Siegert corrections are small and the Jaynes-Cummings model provides an excellent approximation.

\subsection{Multimode Physics and UV Divergences}

The transmission line supports infinitely many modes, each coupled to the transmon. For large mode number $n$, the mode frequency scales as $\omega_n \propto n$ and the coupling strength scales as
\begin{equation}
g_n \propto \sqrt{\omega_n} \propto \sqrt{n}.
\end{equation}
This scaling follows from the normalization of the mode functions and the capacitive nature of the coupling.

The Lamb shift---the total frequency renormalization of the qubit due to coupling to all modes---is given to second order by
\begin{equation}
\delta\omega_q^{\text{Lamb}} = \sum_{n=1}^{\infty}\frac{g_n^2}{\omega_q - \omega_n}.
\end{equation}
For large $n$, the summand approaches a constant: $g_n^2/(\omega_q - \omega_n) \approx g_n^2/(-\omega_n) \propto n/(-n) = -1$. The Lamb shift therefore diverges linearly with the mode cutoff:
\begin{equation}
\delta\omega_q^{\text{Lamb}} \sim -N_{\text{max}} \to -\infty.
\end{equation}

This ultraviolet divergence signals that the ``bare'' qubit frequency in the Hamiltonian differs from the physical (measured) frequency. Renormalization absorbs this divergence by adjusting the bare frequency so that the physical frequency remains finite. This procedure is standard in quantum electrodynamics and its circuit analog.

Physical mechanisms that provide natural UV cutoffs include the junction plasma frequency, the superconducting gap, and dispersion from kinetic inductance. A complete multimode theory would incorporate these effects to obtain finite predictions without artificial cutoffs.

The dispersive shift, unlike the Lamb shift, is much less divergent. The contribution from mode $n$ scales as
\begin{equation}
\chi_n \propto \frac{g_n^2\alpha}{\Delta_n^2} \propto \frac{n}{n^2} = \frac{1}{n}
\end{equation}
for far-detuned modes, giving a sum that diverges only logarithmically. In practice, the dispersive approximation breaks down for very far-detuned modes, further suppressing their contribution.

\subsection{The Extended Hilbert Space}

The Fulton-Walter extended space $\mathcal{H}_{\text{ext}}^{(n)} = L^2(0,L) \oplus \mathbb{C}^{M}$ was introduced in Section~\ref{sec:dynamic_bc}. The key physical insight bears repeating: different transmon states define different boundary conditions, hence different resonator mode frequencies, enabling dispersive readout. The Sturm-Liouville framework complements Hamiltonian diagonalization, with the level repulsion theorem providing a foundation for the avoided crossings central to circuit QED physics.

\section{Multi-Qubit Extension: Parity-Preserving Readout}
\label{sec:parity}

The single-qubit analysis showed how the dispersive shift emerges from a state-dependent boundary condition. We now extend this framework to multiple qubits coupled to a shared resonator, showing how the combined boundary admittance determines the joint-state-dependent resonator frequencies and under what conditions parity becomes a quantum non-demolition observable.

As in the single-qubit case, we work within the single-mode approximation and use the quantum linear response of each transmon as input. The extension to multiple qubits follows naturally from the principle that admittances add in parallel at a common boundary.

Consider two transmon qubits coupled to a shared transmission line resonator. Each qubit contributes an effective admittance at the boundary, and Kirchhoff's current law requires that the current from the transmission line equals the sum of currents into both qubits. Within the linear-response approach, the total effective admittance at the boundary is the sum of individual contributions:
\begin{equation}
Y_{\text{total}}^{(n_1, n_2)}(\omega) = Y_{\text{res}}(\omega) + Y_{q,1}^{\text{eff},(n_1)}(\omega) + Y_{q,2}^{\text{eff},(n_2)}(\omega),
\label{eq:Y_total_two_qubit}
\end{equation}
where we use the state-dependent admittance $Y_q^{(n)}(\omega) = iB_q^{(n)}(\omega)$ for each qubit. The dressed resonator frequencies for each joint state $(n_1, n_2)$ are determined by $Y_{\text{total}}^{(n_1, n_2)}(\omega) = 0$.

Because admittances add linearly, the frequency shifts also add:
\begin{equation}
\delta\omega_r^{(n_1, n_2)} = \delta\omega_r^{(1, n_1)} + \delta\omega_r^{(2, n_2)},
\label{eq:shift_additivity}
\end{equation}
where $\delta\omega_r^{(j, n_j)}$ is the shift due to qubit $j$ in state $n_j$. Using the dispersive shift $\chi_j = g_j^2\alpha_j/[\Delta_j(\Delta_j + \alpha_j)]$ for each qubit, and absorbing state-independent Lamb shifts into a renormalized $\omega_r$, the four computational basis state frequencies become:
\begin{align}
\omega_r(|gg\rangle) &= \omega_r + \chi_1 + \chi_2, \label{eq:freq_gg}\\
\omega_r(|ge\rangle) &= \omega_r + \chi_1 - \chi_2, \label{eq:freq_ge}\\
\omega_r(|eg\rangle) &= \omega_r - \chi_1 + \chi_2, \label{eq:freq_eg}\\
\omega_r(|ee\rangle) &= \omega_r - \chi_1 - \chi_2. \label{eq:freq_ee}
\end{align}
These can be written compactly as $\omega_r(|n_1, n_2\rangle) = \omega_r + \chi_1\sigma_1^z + \chi_2\sigma_2^z$, corresponding to the dispersive Hamiltonian
\begin{equation}
H_{\text{disp}} = \hbar\omega_r \hat{n} + \hbar\chi_1\sigma_1^z\hat{n} + \hbar\chi_2\sigma_2^z\hat{n}.
\label{eq:H_disp}
\end{equation}

The joint parity operator $P = \sigma_1^z\sigma_2^z$ has eigenvalue $+1$ for even-parity states ($|gg\rangle$, $|ee\rangle$) and $-1$ for odd-parity states ($|ge\rangle$, $|eg\rangle$). The dispersive Hamiltonian commutes with parity: $[H_{\text{disp}}, P] = 0$. This holds because each term either involves $\hat{n}$ alone or is proportional to $\sigma_j^z$, both of which commute with $P = \sigma_1^z\sigma_2^z$. In practice, this QND property holds to the extent the dynamics is well-described by the dispersive Hamiltonian and leakage to non-computational states is negligible; measurement-induced transitions can occur due to non-dispersive terms, strong drives, or transmon leakage.

When dispersive shifts are matched, $\chi_1 = \chi_2 \equiv \chi$, the frequencies simplify: $\omega_r(|gg\rangle) = \omega_r + 2\chi$, $\omega_r(|ge\rangle) = \omega_r(|eg\rangle) = \omega_r$, and $\omega_r(|ee\rangle) = \omega_r - 2\chi$. The odd-parity states become frequency-degenerate, but the even-parity states remain distinguishable by $4\chi$. In the boundary condition picture, this occurs because the total admittance is symmetric under qubit exchange within the odd-parity subspace.

This corrects a common misconception: $\chi_1 = \chi_2$ does not make the resonator respond only to parity. The Hamiltonian contains the linear term $\chi(\sigma_1^z + \sigma_2^z)\hat{n}$, which responds to total magnetization, not parity. The frequency structure is $\omega_r + \chi(\sigma_1^z + \sigma_2^z)$, where $\sigma_1^z + \sigma_2^z \in \{+2, 0, 0, -2\}$, not $\sigma_1^z\sigma_2^z \in \{+1, -1, -1, +1\}$.

The implications for coherence are significant. The odd-parity coherence $\langle ge|\rho|eg\rangle$ is protected under matched dispersive shifts since no which-path information distinguishes these states. The even-parity coherence $\langle gg|\rho|ee\rangle$ is not protected: these states produce different frequencies, generating which-path information that dephases their superposition.

True parity-only measurement requires all states within each parity sector to produce identical resonator responses, meaning the linear term $\chi(\sigma_1^z + \sigma_2^z)\hat{n}$ must be cancelled while retaining a parity-dependent term. Several approaches achieve this: Royer, Puri, and Blais use parametric driving to cancel linear dispersive shifts~\cite{Royer2018}; DiVincenzo and Solgun analyze circuit geometries with inherent parity sensitivity~\cite{DiVincenzo2013}; Lalumi\`ere, Gambetta, and Blais study tunable joint measurements~\cite{Lalumiere2010}. In these schemes, the effective Hamiltonian becomes $H_{\text{parity}} = \hbar\omega_r\hat{n} + \hbar\chi_P P\hat{n}$, where $\chi_P$ is an engineered parity-dependent shift and linear terms are absent.

From the boundary condition perspective, these schemes engineer an effective admittance that depends on $P = \sigma_1^z\sigma_2^z$ rather than on $\sigma_1^z$ and $\sigma_2^z$ individually. The individual qubit contributions interfere destructively for linear terms while reinforcing the parity-dependent component.

\section{Structural Correspondence to Quantum Error Correction}
\label{sec:qec}

The parity measurement analyzed in Section~\ref{sec:parity} is a specific instance of stabilizer measurement. We now discuss the broader structural correspondence between stabilizer constraints and boundary conditions, while being explicit about the limits of this correspondence. In stabilizer-based quantum error correction, the code space is defined as the simultaneous $+1$ eigenspace of a set of commuting stabilizer operators $\{S_i\}$:
\begin{equation}
\mathcal{H}_{\text{code}} = \{\ket{\psi} : S_i\ket{\psi} = +\ket{\psi} \text{ for all } i\}.
\end{equation}
This definition has the same structure as a boundary condition: it is a constraint that selects a subspace of compatible states from a larger Hilbert space. For the parity stabilizer $S = Z_1 Z_2$, this correspondence is physical: the dispersive measurement analyzed in Section~\ref{sec:parity} implements a Sturm-Liouville problem where the boundary function $F(\lambda)$ depends on the parity eigenvalue, with parity eigenstates defining distinct boundary conditions.

Errors that anticommute with a stabilizer map states outside the code space: $S_i (E\ket{\psi}) = -E\ket{\psi}$ if $\{S_i, E\} = 0$. This is analogous to how $L_z$ maps Dirichlet-satisfying functions to functions that violate the boundary conditions. Syndrome measurement detects which stabilizers have been violated, and error correction returns the state to the code space.

However, general stabilizer codes involve stabilizers that are not simple $Z$-type parities. Weight-four stabilizers like $X_1 X_2 X_3 X_4$ in the surface code require different physical implementations, typically involving ancilla qubits and sequences of two-qubit gates rather than direct dispersive coupling. Measuring such $X$-type stabilizers requires first rotating the computational basis via Hadamard gates, performing CNOT operations to transfer parity information to an ancilla, and then measuring the ancilla. While this still implements a stabilizer constraint, the connection to boundary conditions becomes structural rather than derived from circuit physics: the stabilizer defines which states are ``allowed,'' but the physical mechanism differs from the direct dispersive coupling analyzed here. The structural insight remains valuable: stabilizer constraints, like boundary conditions, define a subspace of compatible states; errors, like operators that violate boundary conditions, map states outside this subspace; syndrome measurement identifies which constraints have been violated. This parallel provides intuition for why quantum error correction works, even when the physical implementation differs from dispersive parity measurement.

\section{The Born Rule}
\label{sec:born}

Once a measurement interaction has dynamically selected a commuting algebra of effective observables, the remaining question concerns the assignment of probabilities within this algebra. After decoherence, the global state has orthogonal branches, and the reduced density matrix assigns weights $|c_n|^2$ to the eigenstates, where $c_n = \langle n|\psi\rangle$ is the initial amplitude.

The status of these weights as probabilities is a foundational question that we do not attempt to resolve. Under standard noncontextuality assumptions, Gleason-type theorems~\cite{Gleason1957} uniquely determine the probability measure for systems of dimension three or greater: any probability assignment satisfying natural consistency conditions must be given by the Born rule. For two-dimensional systems, analogous conclusions require additional structural assumptions. The representation-theoretic structure explains why eigenstates of the measured observable form the natural basis: they are the irreducible representations of the symmetry preserved by $\Hint$. Decoherence explains why interference between branches is suppressed. What this framework does not explain is why any particular observer finds themselves in one branch rather than another, whether this is a genuine physical question or an indexical one remains contested~\cite{Wallace2012}.

We have chosen to keep this discussion brief because the Born rule lies outside the main scope of this paper, which concerns basis selection rather than outcome probabilities. A fuller treatment would require engaging with the extensive literature on probability in quantum mechanics, the decision-theoretic approaches in Everettian quantum mechanics, and the status of objective probability more generally, topics that deserve dedicated analysis rather than cursory treatment.

\section{Scope and Interpretation}
\label{sec:open}

The framework developed here provides a unified mathematical structure, Sturm-Liouville problems with eigenparameter-dependent boundary conditions, that describes both geometric constraints in confined systems and measurement interactions in circuit QED. This formalism yields concrete results: the dispersive shift formula, interlacing theorems for dressed modes, and the structure of parity-preserving measurements.

The framework addresses basis selection: why the measurement 
Hamiltonian determines which observable is measured, and why its 
eigenstates become operationally stable under environmental decoherence. The transmon's self-Hamiltonian creates state-dependent poles in the boundary admittance; these poles, not the specifics of decoherence, determine which observables commute with the interaction and thus remain well-defined.

The framework takes environmental decoherence as an established 
mechanism and focuses on its structural consequences for observable 
selection. It does not address outcome selection: why any 
particular observer reports a specific result. In the wedge, Dirichlet 
conditions eliminate modes from the Hilbert space; in measurement, 
decoherence correlates branches without eliminating them. The question 
"which branch is actual?" lies outside this analysis and remains 
contested in quantum foundations.

The geometric perspective is compatible with multiple interpretations: 
Everettian, Copenhagen-like, and objective collapse theories. What 
the framework contributes is interpretation-independent: a mathematical correspondence between boundary constraints and observable selection that illuminates the structure of quantum measurement.

\section{Conclusion}

This paper developed three claims connecting boundary conditions, quantum measurement, and stabilizer codes through a first-principles analysis of circuit QED dispersive readout. The first claim is that boundary conditions and measurement interactions share a common structural role: both select a preferred basis by determining which states satisfy the imposed constraint. In a spherical wedge with Dirichlet conditions, sine modes survive while cosine modes are excluded; the operator $L_z$ fails to preserve the constrained domain, while $\partial^2/\partial\phi^2$ remains self-adjoint. In quantum measurement, the interaction Hamiltonian $\Hint$ determines which eigenstates are operationally stable: for dispersive readout with $\Hint = \hbar\chi\sigma_z\hat{n}$, the $\sigma_z$ eigenstates remain distinguishable while superpositions decohere through entanglement with the resonator.

The second claim is that dispersive readout emerges naturally from a first-principles derivation starting from the circuit Lagrangian. The transmission line satisfies the wave equation in the bulk; the transmon at the boundary provides a nonlinear boundary condition derived from current conservation via the variational principle. Linearization for small oscillations yields a frequency-dependent boundary condition, and quantization of the transmon via the Kubo formula produces a state-dependent boundary function $F^{(n)}(\lambda)$ with poles at the qubit transition frequencies. The residues are proportional to squared charge matrix elements and can be written explicitly in terms of circuit parameters as $\delta_{nm} = 2\omega_r|\omega_{nm}||Q_{nm}|^2/(\hbar v Z_0)$. In the dispersive regime, this boundary function becomes approximately rational in the eigenparameter, placing the problem within the Fulton-Walter framework for Sturm-Liouville problems with eigenparameter-dependent boundary conditions. The extended Hilbert space $\mathcal{H}_{\text{ext}} = L^2(0,L) \oplus \mathbb{C}^M$ of the spectral theory provides a mathematical structure: the $L^2$ factor describes the resonator field profile, while $\mathbb{C}^M$ represents boundary amplitudes associated with each transmon transition. This extended space is a mathematical device for the eigenvalue problem with fixed transmon state, distinct from the physical Hilbert space $\mathcal{H}_{\text{res}} \otimes \mathcal{H}_{\text{transmon}}$. Different qubit states define different boundary conditions, hence different resonator frequencies, hence basis selection in dispersive measurement. The dispersive shift $\chi = g^2\alpha/[\Delta(\Delta+\alpha)]$ emerges from solving the transcendental eigenvalue equation $G(\lambda) = F^{(n)}(\lambda)$. The spectral analysis established a level repulsion theorem: no dressed eigenvalue can coincide with a transmon transition frequency, because the boundary function diverges at these points while the resonator function remains finite. This provides a foundation for the avoided crossings observed in circuit QED spectroscopy. At resonance, the vacuum Rabi splitting $2g$ emerges from the eigenvalue equation, matching the Jaynes-Cummings prediction. The multimode extension revealed that the coupling scales as $g_n \propto \sqrt{\omega_n}$, leading to an ultraviolet divergence in the Lamb shift that requires renormalization, while the dispersive shift remains finite.

The third claim concerns multiple qubits and error correction. For two qubits coupled to a shared resonator, the dispersive Hamiltonian commutes with joint parity, making parity a QND observable within the dispersive approximation. Matched dispersive shifts render odd-parity states frequency-degenerate while even-parity states remain distinguishable; true parity-only measurement requires engineered suppression of linear dispersive terms. Stabilizer constraints exhibit the same mathematical structure as boundary conditions, both define subspaces of compatible states, though for $X$-type stabilizers and general stabilizer codes, the physical implementation differs and the correspondence remains structural.

The framework addresses basis selection, not outcome selection. Boundary conditions eliminate incompatible modes from the Hilbert space; measurement correlates branches through entanglement without eliminating any. The contribution is a unified perspective connecting quantum measurement to classical mode selection, grounded in a first-principles derivation from circuit physics, with explicit analysis for dispersive readout and parity measurement in circuit QED.


\end{document}